\documentclass[11pt]{amsart}

\newcommand{\FBF}{\textsc{FBF}}

\usepackage{amsmath,amssymb,amsthm}
\usepackage{mathtools}

\usepackage{hyperref}
\usepackage[nameinlink]{cleveref}

\usepackage{paralist}

\usepackage{booktabs}
\usepackage{threeparttable} %% checked is Overleaf compliant

\usepackage{tabularx}
\usepackage{array}
\newcolumntype{L}{>{\raggedright\arraybackslash}p{0.15\textwidth}}
\newcolumntype{C}{>{\centering\arraybackslash}p{0.40\textwidth}}

\usepackage{graphicx}
\usepackage{tikz}

\title[Method for Ramsey Colourings for 8 and 13 Colours]{%%Fusion Beautiful Fusion: \\ 
A New Method that can Generate Ramsey Colourings for Eight and Thirteen Colours}

\author[C. Gretton, T. Kowalski, R. Pennifold, and C. Christopher]{Charles Gretton} 
\thanks{Corresponding author Email: charles.gretton@anu.edu.au}

\author[]{Tomasz Kowalski} \author[]{Richard Pennifold} \author[]{Cody Christopher}
\date{\today}

\begin{document}

\begin{abstract}
We study representations for relation algebras corresponding to certain edge colourings of complete graphs. 
Previously suitable colourings were obtained for the number of colours $n$ up to $2000$, with two exceptions: $n = 8$ and $n = 13$. 
Using a method that we have called the \FBF\  (Fusion Beautiful Fusions) method, we find colourings for $8$ and $13$ colours.
Our method is a new guess-and-check approach that we describe as an adaptation of Comer's finite-field method. 
Using \FBF\  we construct novel colourings that are non-isomorphic to existing published colourings for 5, 6, 7, 9, 10, 11 and 12 colours.  
\end{abstract}

\maketitle

\section{Introduction}

%% from Tomasz' paper

For a finite set of $n$ colours $C = \{c_1,\ldots, c_n\}$, and a complete graph
$K_m$=(V,E), can the edges $E$ of $K_m$ be coloured by colours from $C$ in such a way that:

\begin{enumerate}
\item there are no monochromatic triangles, and 
\item every non-monochromatic triangle appears everywhere it can?
\end{enumerate}

\noindent The first condition gives an upper bound for the possible size of $K_m$, via
Ramsey's theorem. The second condition decomposes into the following: \begin{inparaenum}[(i)]\item every vertex of $K_m$ is incident with at least one edge of each colour, and \item given any edge $\{x, y\}$ of colour $c_i$, and any colours $c_j$, $c_k$, such
that $j$ and $k$ are not both equal to $i$, there exists a vertex $z$ such that $\{x, z\}$ is
coloured by $c_j$ and $\{z, y\}$ is coloured by $c_k$.\end{inparaenum} 

Formally, let $c:E\to C$ be a surjective function that assigns a colour in $C$ to each edge of $E$. 
A Ramsey colouring satisfies:

\begin{equation*} 
\begin{aligned}  & \forall c_i \in \operatorname{im}(c),\; \forall \{x,y\}\in E \;\text{ with }\; c(\{x,y\})=c_i,\\ &\forall c_j,c_k \in \operatorname{im}(c) \;\text{ such that }\; |\{c_i,c_j,c_k\}|>1,\\ &\exists z \in V\setminus\{x,y\} \;\text{ such that }\; c(\{x,z\})=c_j \text{ and } c(\{z,y\})=c_k. 
\end{aligned} 
\end{equation*}

Our contribution is to find the first colourings for \mbox{$|C|=8$} and \mbox{$|C|=13$}. We do this with a general-purpose approach which we describe as an adaptation of Comer's finite-field method. 

\section{Related Work}

Existing computational investigations of finite-field constructions using Comer's method have been published by Alm and Manske~\cite{Alm:and:Manske:2015,Alm2017} and Kowalski~\cite{Kowalski2015}.
We depart from the method studied by those authors by {\em fusing} pairs of fine relations, so-called {\em cyclotomic classes}, into coarser relations.
The term ``fusion'' here refers to combining relations as described by Bannai and Ito~\cite{BannaiIto1984}.
We also evaluate non-isomorphism between colourings, with our results established using an incidence-graph reduction with colour vertices and {\sc nauty/Traces}, the tool by McKay and Piperno~\cite{MCKAY201494}.

\section{Background}

\newcommand{\F}{\mathbb F}
\newcommand{\Z}{\mathbb Z}

For a prime power $q=p^e$, $\F_q$ denotes the field with $q$ elements and $\F_q^\times=\F_q\setminus\{0\}$ its cyclic multiplicative group. 
A generator $g$ is a primitive element, which in the case of a prime field is a primitive root modulo $q$.
Assuming $q \equiv 1 \pmod r$, we have $r$ cosets available taking $H=\langle g^{r}\rangle$, specifically:
\[C_a=g^aH,\qquad a\in\Z_r=\Z/r\Z.\]
%%
%%n=9
\noindent These ideas can be made concrete with a small example.
Take $p=449$ and $e=1$ so that $q=p$. We have primitive root $g=3$ and using this can construct $r=16$ cosets with $H=\langle g^{16}\rangle$ as
\[C_a=g^aH=\{g^{a+16t}:t\in\Z, 0\le t<(q-1)/16\}.\]
\noindent We refer to $C_a$ as a {\em fine} set, and by fusing such fine cosets into {\em coarser} sets shall arrive at partitions suitable for our desired colourings. 

\section{\FBF\  Method:\\Adapting Comer's Finite-Field Method for 8 and 13 Colours}

We suppose that $p$, $g$ and the $r$ cosets $\{C_0,\ldots, C_{r-1}\}$, as above, are available.
%%
%% Although the method we describe can be applied to prime powers $q=p^e$, we restrict our search to prime fields. 
%%
An appropriate prime $p$ is discovered by iterating over primes in order, starting with a sufficiently small prime.
Here, we shall always take $q=p$ and $e=1$, and leave explorations of prime powers to future work.
We choose $g$ to be a generator of $\F^\times_p$.
The value of $r$ is chosen to be some positive integer multiple of $n$, the desired number of colours. 
We only consider $r=2n$ below because this gives us the desired colourings and it keeps the exposition simple. 
More generally, with $r=sn$, the $r$ fine cosets can be partitioned into $n$ blocks of $s$ cosets. 
Fusing the cosets in each block produces $n$ coarse sets, each of cardinality $(q-1)/n$.
%%
%%More generally, for $r=sn$, the method can be adapted to merge the fine cosets into coarse sets of size $s$ to arrive at an appropriate partition. 
For example, $s=3$ would require merging triples, $s=4$ would require quadruples, 
and so on.
%%
%% Our investigations thus far with fusions over triples, quadruples, etc., have not been fruitful. 

With the above in hand, at prime $p$ an iteration of our search method proceeds serially as follows.\footnote{A {\sc SageMath} notebook demonstrating the approach is available here \url{https://github.com/charlesgretton/ramsey-colourings}.}

\begin{enumerate}
\item For every possible pair-fusion $D_{a,b}=C_a\cup C_b$ of fine cosets with $a,b\in\Z_r$ and $a\neq b$, keep only sets $D_{a,b}$ satisfying: 
    \begin{enumerate}[(i)]
      \item The inverse closure rule $-D_{a,b}=D_{a,b}$, and 
      \item the condition ensuring no monochromatic triangles and the existence of every permitted triangle containing two edges of the same colour, specifically $D_{a,b}+D_{a,b}=\mathbb F_p\setminus D_{a,b}$. Here, $A,B\subseteq\mathbb F_p$, the sumset notation above is defined as 
      \[A+B=\{a+b:a\in A, b\in B\}.\]
    \end{enumerate}
\item Construct a compatibility graph $G^D=(V^D,E^D)$ with elements in $V^D$ in one-to-one correspondence with surviving fused sets, and an edge $\{D_{a,b},D_{c,d}\}$ iff $\{a,b\}\cap\{c,d\}=\emptyset$ and ensuring every non-monochromatic triangle will appear everywhere it can via $D_{a,b}+D_{c,d}=\mathbb F_p\setminus\{0\}$. 
\item If a clique of size $n$ in $G^D$ exists, let $\{D_{a_1,b_1},\ldots,D_{a_n,b_n}\}$ be the fused sets corresponding to that clique's vertices. These sets are pairwise disjoint and collectively contain all $r=2n$ fine cosets. They therefore partition $\F^\times_p$. The Ramsey graph colouring of $K_p$ with vertices in one-to-one correspondence with $\F_p$ is then achieved by colouring every edge $\{x,y\}$, for distinct $x,y\in\F_p$, using the rule \[ c(\{x,y\})=c_{a_i,b_i} \iff y-x\in D_{a_i,b_i}.\] If no such clique exists, the check fails for $p$ and search iterates continuing at the next prime.
\end{enumerate}

\subsection{Reduction to index differences and mixed pairings} \label{subsec:usesymmetry}

We describe and exploit a symmetry that reduces the search required to obtain the desired colourings by fusing pairs.
We find that in many cases the desired colouring can be constructed directly, without building and searching the full compatibility graph.
%%
%% This is achieved by first using a filtering step that implements a symmetry reduction.
%%
The index difference $\delta=b-a\pmod{2n}$ of a fused pair $D_{a,b}=C_a\cup C_b$ determines whether it satisfies the required inverse-closure and monochromatic sumset conditions.
Alone, $\delta$ does not determine whether the sumset compatibility conditions between distinct fused pairs are satisfied.
Note that multiplication by $g^{-a}$ maps $D_{a,b}$ to $D_{0,b-a}$.
Specifically, \[ g^{-a}D_{a,b} = g^{-a}(C_a\cup C_b) = C_0\cup C_{b-a} = D_{0,b-a}, \] where the coset indices are taken modulo $2n$.
Multiplication by a nonzero field element distributes over addition, \[ g^{-a}(X+Y)=g^{-a}X+g^{-a}Y \] for all $X,Y\subseteq\F_p$. 
Therefore, multiplication preserves the equalities and containments appearing in the inverse-closure and monochromatic sumset conditions for an individual fused pair.
Simultaneously translating both coset indices by the same element of $\Z_{2n}$ preserves the monochromatic admissibility of a fused pair.
Further, these fused pairs are unordered: \[ D_{a,b}=D_{b,a}. \]
Thus, differences $\delta$ and $-\delta$ describe the same orbit of unordered fused pairs.
The nonzero differences modulo $2n$ can therefore be represented by \[ \delta\in\{1,\ldots,n\}. \]
Only $n$ unoriented difference classes need to be tested, rather than all \[ \binom{2n}{2} \] pairs of fine cosets.
The difference $\delta=n$ is self-inverse under negation modulo $2n$, since \[ -n\equiv n\pmod{2n}, \] whereas the remaining nonzero differences occur in distinct pairs $\{\delta,-\delta\}$.

\begin{figure}[t]
  \centering

  % Colour-blind-friendly palette.
  \definecolor{cA}{RGB}{0,119,187}
  \definecolor{cB}{RGB}{51,187,238}
  \definecolor{cC}{RGB}{0,153,136}
  \definecolor{cD}{RGB}{238,119,51}
  \definecolor{cE}{RGB}{204,51,17}

  \resizebox{\linewidth}{!}{%
  \begin{tikzpicture}[
    vertex/.style={
      inner sep=0pt,
      font=\scriptsize
    },
    classA/.style={
      draw=cA,
      very thick,
      solid
    },
    classB/.style={
      draw=cB,
      very thick,
      dashed
    },
    classC/.style={
      draw=cC,
      very thick,
      densely dotted
    },
    classD/.style={
      draw=cD,
      very thick,
      dash pattern=on 5pt off 2pt on 1pt off 2pt
    },
    classE/.style={
      draw=cE,
      very thick,
      dash pattern=on 5pt off 2pt on 1pt off 2pt on 1pt off 2pt
    }
  ]

    % Panel (a): Comer pairing.
    % The order cA,cB,cE,cC,cD keeps the corresponding fused
    % colour classes consistent with panels (b) and (c).
    \begin{scope}[xshift=-4.5cm]
      \foreach \i in {0,...,9} {
        \pgfmathsetmacro{\ang}{90-36*\i}
        \node[vertex] (a\i) at (\ang:1.22cm) {\i};
      }

      \draw[classA] (a0) -- (a5); % cA
      \draw[classB] (a1) -- (a6); % cB
      \draw[classE] (a2) -- (a7); % cE
      \draw[classC] (a3) -- (a8); % cC
      \draw[classD] (a4) -- (a9); % cD

      \node[
        font=\scriptsize,
        align=center,
        text width=3.0cm
      ] at (0,-2.02cm) {
        (a) Comer pairing\\
        $\delta=5$ only
      };
    \end{scope}

    % Panel (b): mixed pairing with two difference-2 blocks.
    \begin{scope}
      \foreach \i in {0,...,9} {
        \pgfmathsetmacro{\ang}{90-36*\i}
        \node[vertex] (b\i) at (\ang:1.22cm) {\i};
      }

      \draw[classA] (b0) -- (b2); % cA
      \draw[classB] (b1) -- (b6); % cB
      \draw[classC] (b3) -- (b8); % cC
      \draw[classD] (b4) -- (b9); % cD
      \draw[classE] (b5) -- (b7); % cE

      \node[
        font=\scriptsize,
        align=center,
        text width=3.0cm
      ] at (0,-2.02cm) {
        (b) Mixed pairing\\
        two $\delta=2$ blocks
      };
    \end{scope}

    % Panel (c): mixed pairing with four difference-2 blocks.
    \begin{scope}[xshift=4.5cm]
      \foreach \i in {0,...,9} {
        \pgfmathsetmacro{\ang}{90-36*\i}
        \node[vertex] (c\i) at (\ang:1.22cm) {\i};
      }

      \draw[classA] (c0) -- (c2); % cA
      \draw[classB] (c1) -- (c9); % cB
      \draw[classC] (c3) -- (c8); % cC
      \draw[classD] (c4) -- (c6); % cD
      \draw[classE] (c5) -- (c7); % cE

      \node[
        font=\scriptsize,
        align=center,
        text width=3.0cm
      ] at (0,-2.02cm) {
        (c) Mixed pairing\\
        four $\delta=2$ blocks
      };
    \end{scope}

    % Shared key. Colour and line pattern both identify the class.
    \begin{scope}[xshift=-5.45cm,yshift=-3.00cm]
      \draw[classA] (0,0) -- (0.72,0);
      \node[anchor=west,font=\scriptsize]
        at (0.82,0) {class $c_A$};

      \draw[classB] (2.18,0) -- (2.90,0);
      \node[anchor=west,font=\scriptsize]
        at (3.00,0) {class $c_B$};

      \draw[classC] (4.36,0) -- (5.08,0);
      \node[anchor=west,font=\scriptsize]
        at (5.18,0) {class $c_C$};

      \draw[classD] (6.54,0) -- (7.26,0);
      \node[anchor=west,font=\scriptsize]
        at (7.36,0) {class $c_D$};

      \draw[classE] (8.72,0) -- (9.44,0);
      \node[anchor=west,font=\scriptsize]
        at (9.54,0) {class $c_E$};
    \end{scope}

  \end{tikzpicture}%
  }
  \caption{
    %% Colour and line pattern redundantly identify the five colour classes.   
    %%
    Why the fusion search produces colourings outside the standard Comer constructions. 
    For five colours over $\F_{101}$, the standard construction uses the uniform difference-$5$ pairing in~(a).
    Panel~(b) contains two difference-$2$ blocks and three difference-$5$ blocks, while panel~(c) contains four difference-$2$ blocks and one difference-$5$ block.
    These mixed pairings break the uniform cyclic structure of the standard construction. 
    %%
    %% AddedFor example, at $n=9$ and $p=397$, exhaustive enumeration gives $3{,}028$ translation-orbit representatives, whose coloured-triangle invariants are all distinct. Those colourings are pairwise non-isomorphic including if we allow permutations of colours.
  }\label{fig:nonisomorphic-mixed-pairings}
\end{figure}

Let $H_{\mathrm{adm}}$ be the graph on vertex set $\Z_{2n}$ in which $\{a,b\}$ is an edge precisely when the difference $b-a\pmod{2n}$ is admissible, i.e., when $D_{a,b}$ is a vertex of $G^D$.
Thus, each edge $\{a,b\}$ represents an individually admissible fused pair $D_{a,b}=C_a\cup C_b$.
A perfect matching of $H_{\mathrm{adm}}$ is a set of $n$ pairwise disjoint edges covering every vertex of $\Z_{2n}$ exactly once. 
Equivalently, it specifies $n$ disjoint fused pairs that cover all $2n$ fine cosets.
%%
%%A collection of $n$ disjoint fused pairs covering all $2n$ fine cosets is precisely a {\em perfect matching} of $H_{\mathrm{adm}}$.
%%
A relatively efficient search may therefore proceed in two stages:
\begin{enumerate}
\item enumerate perfect matchings of $H_{\mathrm{adm}}$; and
\item for each perfect matching, verify the sumset compatibility condition \[ D_{a,b}+D_{c,d}=\F_p\setminus\{0\} \] for every two distinct selected fused pairs.
\end{enumerate}
This separates the inexpensive, symmetry-reduced generation of monochromatically admissible pairs from the more restrictive compatibility tests between distinct colour classes.
The existence of such a perfect matching is only a necessary condition and does not, in general, establish that the corresponding fusion is a Ramsey colouring.

%% Equivalently, a proposed collection of $n$ disjoint fused pairs covering all $2n$ fine cosets determines a perfect matching of $H_{\mathrm{adm}}$. 
%%
In an implementation of our construction, these pairs may still be represented as vertices of the compatibility graph $G^D$, with the sumset compatibility conditions between distinct fused pairs encoded by adjacency in $G^D$. 
The difference reduction decreases the work required to construct the candidate vertices before the clique search is performed.

\section{Constructed Ramsey Colourings}
%% give details of the constructed graph and URL links to pajek files. 

We report existing known colourings alongside new colourings found using the construction developed in this work. 
Our search and results only considered primes smaller than $2\cdot 10^3$.
Results are reported in Table~\ref{tab:results}.

\begin{table}[!ht]
\centering
\begin{threeparttable}
\caption{\label{tab:results}
Reporting constructions found using Comer's finite-field method and our fused-pairs method for numbers of colours ranging from $2$ to $13$. 
First column is the number of colours $n$, second column reports values of $p$ at which colourings are found using Comer's method, and last column reports values of $p$ that yield colourings using \FBF. 
The middle column results were known prior to this work.%% obtained without pair-fusions, where each colour is from one of $n$ cosets. 
%%
%%The final column is with pair-fusions, where $2n$ cosets are paired into $n$ sets that represent $n$ colours. 
%%
}
\begin{tabular}{L C C}
\toprule $n$ & no-fusion (historical) & pair-fusion \\
\midrule
2 & $5$ & -- \\
3 & $13$ & $13^=$ \\
4 & $41$ & -- \\
5 & $71,\; 101$ & $\mathbf{101}^{\dagger}$ \\
6 & $97,\; 157,\; 277$ & $\mathbf{97}^{\dagger},\; \mathbf{193},\; \mathbf{313}$ \\
7 & $491$ & $\mathbf{281}$ \\
8 & -- & $\mathbf{449},\ \mathbf{929}$ \\
9 & $523,\ 577$ & $\mathbf{397},\ \mathbf{577}^{\dagger}$ \\
10 & $1181$ & $\mathbf{641}$ \\
11 & $947,\ 1409$ & $\mathbf{1277},\ 1409^=$ \\
12 & $769,\ 1201$ & $\mathbf{769}^{\dagger},\ \mathbf{1201}^{\dagger},\ \mathbf{1249}$ \\
13 & -- & $\mathbf{1613}$ \\
\bottomrule
\end{tabular}
\begin{tablenotes}[flushleft]
\item We use a bold font to indicate a new colouring, that is either for a prime that does not yield a colouring without fusing pairs or that produces a colouring that is not isomorphic to a historical colouring at that prime.
\item We indicate that a pair-fusion colouring was found at a prime at which a historical colouring exists with $^\dagger$. 
\item We use $^=$ if the fused-pairs method finds a historical colouring.
\item A dash symbol `--' indicates the method does not produce a colouring. 
\hrule
\end{tablenotes}
\end{threeparttable}
\end{table}

\newtheorem{theorem}{Theorem}
\begin{theorem} 
Ramsey algebras with \(8\) and \(13\) colours have finite representations. 
In particular, the \FBF\  construction gives representations over \(\mathbb F_{449}\) and \(\mathbb F_{1613}\), respectively. 
\end{theorem}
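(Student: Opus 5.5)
The proof is by explicit certificate: exhibit the fusion data and verify, by a finite check, that the coloring defined in step~3 of the \FBF\ method satisfies the Ramsey conditions stated in the introduction. For $n=8$ I take $p=449$, $g=3$, $r=16$ and the cosets $C_0,\dots,C_{15}$ from the Background section. For $n=13$ I take $p=1613$, a primitive root $g$ (e.g.\ the least one), $r=26$, and $H=\langle g^{26}\rangle$; note $1612=26\cdot 62$, so $r\mid q-1$. In each case I state an explicit perfect matching $\{\{a_1,b_1\},\dots,\{a_n,b_n\}\}$ of $\Z_{2n}$, namely the one returned by the clique search. The sets $D_i=C_{a_i}\cup C_{b_i}$ then partition $\F_p^\times$. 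Each $D_i$ is nonempty, so the coloring $c$ is surjective.

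Next I reduce the Ramsey conditions to the sumset conditions of step~1 and step~2. Because $c(\{x,y\})$ depends only on $y-x$, I need $-D_i=D_i$ so that the coloring is well defined on unordered edges. This holds automatically when $-1\in H$; more generally it holds when $-1=g^{(p-1)/2}$ permutes $\{a_i,b_i\}$. Since translation $z\mapsto z-x$ is an automorphism of the coloring, it suffices to treat edges $\{0,d\}$ with $d\in D_i$.

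\emph{No monochromatic triangle.} A monochromatic triangle $0,u,d$ in colour $i$ would give $u\in D_i$ and $d-u\in D_i$, so $d\in D_i+D_i$. The condition $D_i+D_i=\F_p\setminus D_i$ rules this out.

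\emph{Every allowed triangle appears.} I need, for colours $j,k$ not both equal to $i$, some $z$ with $z\in D_j$ and $d-z\in D_k$ (using $-D_k=D_k$), i.e.\ $d\in D_j+D_k$. If $j=k\neq i$, this follows from $D_j+D_j=\F_p\setminus D_j\ni d$. If $j\neq k$, it follows from $D_j+D_k=\F_p^\times$. In both cases $z\neq 0,d$ automatically, because $z\in D_j$ and $d-z\in D_k$ are nonzero.

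Finally I invoke the standard Comer correspondence: such a coloring of $K_p$ is exactly a representation of the Ramsey algebra with $n$ colours, the atoms being the colour relations on the vertex set $\F_p$. This gives a finite representation.

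The main obstacle is the verification itself, which is computational. For $n=13$ it requires $\binom{13}{2}+13$ sumset checks in $\F_{1613}$, each set having $124$ elements. I would make this auditable in two ways. First, record each set by its coset indices, and use the multiplication-by-$g^{-a}$ symmetry from Section~\ref{subsec:usesymmetry} to reduce the single-class conditions to the $\delta$-classes that actually occur. Second, reduce each sumset check to testing one representative per coset: since $D_j+D_k$ is a union of $H$-cosets, it suffices to show that each $g^t$ with $0\le t<2n$ (and the required $0$ or non-$0$) lies in or outside the sumset. The remaining checks are carried out by computer, as in the accompanying notebook.
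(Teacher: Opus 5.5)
Your proposal is correct and takes the same approach as the paper. The paper's proof is an explicit pair-fusion over $\mathbb F_{449}$ and $\mathbb F_{1613}$, with the partition, inverse-closure and sumset conditions checked by computer and the representation read off from the resulting colouring of $K_p$. You add what the paper leaves implicit, namely why $D_i+D_i=\mathbb F_p\setminus D_i$ and $D_j+D_k=\mathbb F_p^\times$ give both Ramsey conditions. Your conditional on inverse closure does hold in both cases: $(p-1)/(2n)$ equals $28$ and $62$, both even, so $-1\in H$.
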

\begin{proof} 
Table~\ref{tab:results} reports \FBF\  constructions over \(\F_{449}\) and \(\F_{1613}\) for $8$ and $13$ colours, respectively. 
In each case, the corresponding fused sets partition $\F_p^\times$ and satisfy the inverse-closure and sumset conditions required.
The resulting edge colourings provided correspond to the claimed representations. 
\end{proof}

\newcommand{\Can}{\operatorname{Can}}

\section{Coloured Isomorphism Testing}

In Table~\ref{tab:results} we report a number of new colourings that are not isomorphic to colourings found previously using Comer's method. 
We now briefly document our approach to checking isomorphism. 
Sufficient conditions for non-isomorphism enable us to perform a quick and useful preliminary check, as follows. 
Given a clique $K_m=(V,E)$ coloured with colours $c_i\in C$ via the map $c$, we can define its monochromatic subgraph
\[G_i=(V,E_i),\]
with edges
\[E_i=\{\{x,y\}:c(\{x,y\})=c_i\}.\]
The {\sc nauty/Traces} tool can be used to compute the canonical form $\Can(G_i)$ of the uncoloured graph and we can thereby compute the multiset
\[\mathcal{M}(c,K_m) = \bigl[\!\bigl[ \Can(G_1),\ldots,\Can(G_n) \bigr]\!\bigr].\]%%\Bigl\{\bigl\{ \Can(G_1),\ldots,\Can(G_n) \bigr\}\Bigr\}.\]

\newtheorem{proposition}{Proposition}
\begin{proposition}
If two edge-coloured cliques $K_m$ and $K'_m$, coloured with $c$ and $c'$ respectively, are isomorphic allowing relabelling of colours, then $\mathcal{M}(c,K_m)=\mathcal{M}(c',K'_m)$.
\end{proposition}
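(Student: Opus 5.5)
We unwind the definition of coloured isomorphism up to relabelling of colours. Suppose $K_m=(V,E)$ with colouring $c$ and $K'_m=(V',E')$ with colouring $c'$ are isomorphic allowing relabelling of colours. Concretely, this means there is a bijection $\varphi:V\to V'$ and a permutation $\pi$ of the colour set $C$ such that for every edge $\{x,y\}\in E$,
\[c'(\{\varphi(x),\varphi(y)\})=\pi\bigl(c(\{x,y\})\bigr).\]
Since both graphs are complete on $m$ vertices, $\varphi$ automatically maps $E$ bijectively onto $E'$. The plan is to show that $\varphi$ carries each monochromatic subgraph $G_i$ of $c$ onto a monochromatic subgraph $G'_{\pi(i)}$ of $c'$. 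Here we write $\pi(i)$ for the index $j$ with $\pi(c_i)=c_j$.

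The first step is to fix $i$ and check that $\varphi$ is a graph isomorphism $G_i\to G'_{\pi(i)}$. For $\{x,y\}\in E_i$ we have $c(\{x,y\})=c_i$, so $c'(\{\varphi(x),\varphi(y)\})=c_{\pi(i)}$. Hence $\{\varphi(x),\varphi(y)\}\in E'_{\pi(i)}$. Conversely, if $\{\varphi(x),\varphi(y)\}\in E'_{\pi(i)}$, then $\pi(c(\{x,y\}))=\pi(c_i)$. Injectivity of $\pi$ then gives $\{x,y\}\in E_i$. Since $\varphi$ is a bijection on vertex sets, this shows $G_i\cong G'_{\pi(i)}$ as uncoloured graphs.

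The second step invokes the defining property of the canonical form computed by {\sc nauty/Traces}: isomorphic graphs have identical canonical forms. Applying this gives $\Can(G_i)=\Can(G'_{\pi(i)})$ for every $i$. Because $\pi$ is a bijection on indices $\{1,\ldots,n\}$, the list $\Can(G'_{\pi(1)}),\ldots,\Can(G'_{\pi(n)})$ is a reordering of $\Can(G'_1),\ldots,\Can(G'_n)$. Multisets ignore order, so $\mathcal{M}(c,K_m)=\mathcal{M}(c',K'_m)$.

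There is no real obstacle here. The only points needing care are the following:
\begin{enumerate}
\item stating precisely what ``isomorphic allowing relabelling of colours'' means, namely a vertex bijection together with a colour permutation;
\item using injectivity of $\pi$ in the reverse inclusion of the first step;
\item relying on the fact that $\Can$ is a complete isomorphism invariant, which is assumed from the canonical labelling tool.
\end{enumerate}
Only the invariance direction of the third point is needed. It is also worth noting that the converse of the proposition fails in general, which is why the paper treats it only as a sufficient test for non-isomorphism.
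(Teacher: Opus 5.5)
Your proof is correct and takes essentially the same route as the paper. The vertex bijection together with the colour permutation makes each $G_i$ isomorphic to $G'_{\pi(i)}$, the canonical forms then agree, and the permutation only reorders the multiset; your explicit check of the reverse inclusion via injectivity of $\pi$ fills in a detail the paper leaves implicit.
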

\begin{proof}
By definition isomorphic graphs have the same canonical label.
A coloured-graph isomorphism consists of a vertex permutation $f$ and a colour permutation $\sigma$. Specifically, we have
\[c'(\{f(x),f(y)\})=\sigma(c(\{x,y\}))\]
for all distinct $x,y\in V(K_m)$.
%% c.g. note I am not using f an g symbols here, as that would overload $g$
%%
For each colour $c_i$, the map $f$ is an ordinary graph isomorphism, so that
\[\Can(G_i)=\Can(G'_{\sigma(i)}).\]

Permutation $\sigma$ only relabels edge colours, therefore multisets of canonical labels are equal. 
\end{proof}

Using this proposition we are able to compute non-isomorphism quickly for the purposes of the reported experiments. 
Naturally, that $\mathcal{M}(c,K_m)=\mathcal{M}(c',K'_m)$ does not in general imply the graphs are isomorphic.
We apply a test using the standard incidence-graph method, using {\sc nauty/Traces} to establish isomorphism where applicable.

An analogous sufficient test uses counts of coloured triangles.
For an $n$-colouring $c$ of $K_m$ edges, let $N_c(i,j,k)$ denote the number of triangles whose unordered edge-colour triple is $\{c_i,c_j,c_k\}$, and define
\[
\mathcal{T}(c,K_m) =
\bigl[\!\bigl[ N_c(i,j,k): 1\leq i\leq j\leq k\leq n \bigr]\!\bigr].
\]

\begin{proposition}\label{prop:trianglecounting}
If two edge-coloured cliques $K_m$ and $K'_m$, coloured with $c$ and $c'$ respectively, are isomorphic allowing relabelling of colours, then
  \[ \mathcal{T}(c,K_m)=\mathcal{T}(c',K'_m). \]
\end{proposition}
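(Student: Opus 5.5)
The plan is to follow the same pattern as the proof of the preceding proposition. Fix a coloured isomorphism consisting of a vertex bijection $f\colon V(K_m)\to V(K'_m)$ and a colour permutation $\sigma$, with
\[c'(\{f(x),f(y)\})=\sigma(c(\{x,y\}))\]
for all distinct $x,y$. The first step is to observe that $f$ induces a bijection on triangles. It sends the vertex set $\{x,y,z\}$ to $\{f(x),f(y),f(z)\}$, and its inverse is induced by $f^{-1}$.

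Next, track the colour types. If $\{x,y,z\}$ has unordered edge-colour triple $\{c_i,c_j,c_k\}$ under $c$, then its image has the triple $\{\sigma(c_i),\sigma(c_j),\sigma(c_k)\}$ under $c'$. This triple is taken as a multiset, so repeated colours are handled correctly. Restricting the bijection on triangles to those of type $\{i,j,k\}$ therefore gives
\[N_c(i,j,k)=N_{c'}(\sigma(i),\sigma(j),\sigma(k)),\]
where the right-hand side is read after sorting the indices so that they are non-decreasing.

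The last step is to conclude that the two multisets agree. The map $\{i,j,k\}\mapsto\{\sigma(i),\sigma(j),\sigma(k)\}$ is a bijection on multisets of size three drawn from $\{1,\ldots,n\}$, with inverse induced by $\sigma^{-1}$. Consequently the indexed families $\bigl(N_c(i,j,k)\bigr)$ and $\bigl(N_{c'}(i,j,k)\bigr)$ differ only by a reindexing. Since $\mathcal{T}$ discards the indexing and keeps only the multiset of values, $\mathcal{T}(c,K_m)=\mathcal{T}(c',K'_m)$.

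I expect no real obstacle. The one point needing care is the bookkeeping when $\sigma$ is applied to a sorted index triple $i\le j\le k$, because the image has to be re-sorted. This is exactly why the statement compares multisets and not ordered tuples. Both colourings must also use the same colour set of size $n$, so that $\sigma$ is a permutation of $\{1,\ldots,n\}$; this holds since the colourings are surjective and are assumed isomorphic.
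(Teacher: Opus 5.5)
Your proposal is correct and follows the same route as the paper's proof: the coloured isomorphism induces a bijection on triangles, the colour permutation $\sigma$ only reindexes the counts $N_c(i,j,k)$, and so the multiset of counts is unchanged. You also spell out details the paper leaves implicit, namely treating colour types as multisets, re-sorting indices after applying $\sigma$, and noting that $\sigma$ induces a bijection on size-three index multisets.
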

\begin{proof}
A coloured-graph isomorphism maps triangles bijectively to triangles.
A permutation of the colour classes merely permutes the colour triples indexing their counts.
Therefore, the multiset of coloured-triangle counts is unchanged.
\end{proof}

At $n=9$ and $p=397$, the admissible index differences are $\{4,5,7,8\}$.
We exhaustively enumerated every perfect matching of the admissible-pair graph $H_{\mathrm{adm}}.$ 
The enumeration recursively selects the least unmatched index and branches over every admissible partner, so every perfect matching is generated exactly once. 
This yields $54{,}352$ perfect matchings.
Here, $18=2n$ is the number of fine cosets. 
Cyclic translation acts on a perfect matching 
\[ M=\bigl\{\{a_1,b_1\},\ldots,\{a_9,b_9\}\bigr\} \] 
by 
\[M+s = \bigl\{ \{a_1+s,b_1+s\},\ldots,\{a_9+s,b_9+s\} \bigr\}, \qquad s\in\Z_{18},\]
%%
%%\[ M+t = \bigl\{ \{a_1+t,b_1+t\},\ldots,\{a_9+t,b_9+t\} \bigr\}, \qquad t\in\Z_{18}, \] 
with all indices reduced modulo $18$. 
Perfect matchings in the same orbit yield isomorphic colourings because multiplication by $g^s$ maps $C_a$ to $C_{a+s}$.
We identify each orbit by retaining the lexicographically least of its eighteen translates.
This yields $3{,}028$ orbits under cyclic translation of $\Z_{18}$.
We computed $\mathcal{T}(c,K_{397})$ for one representative of every such orbit.
The resulting $3{,}028$ multisets were pairwise distinct: no two orbit representatives produced the same multiset of coloured-triangle counts.
%%All $3{,}028$ invariants were distinct.
%%
By Proposition~\ref{prop:trianglecounting}, the corresponding colourings are pairwise non-isomorphic, even allowing permutations of the nine colours.
 
\section{Author Contributions}

Charles Gretton discovered the \FBF\  method.
Richard Pennifold developed {\sc SageMath} scripts implementing Comer's method which are the basis of the first implementation of \FBF.
Richard Pennifold also discovered several new colourings reported in this paper.
Tomasz Kowalski discovered multiple reported colourings, brought the colouring problem to our attention in the first place, and noticed that the \FBF\  method produces colourings that are non-isomorphic to existing colourings (e.g., see the two colourings at $n=5$ and $p=101$ in Table~\ref{tab:results}). 
%%
%% Tomasz also patiently described the problem to Charles Gretton, over many many years, correcting occasionally misunderstandings, sharing GAP-language scripts, and directing us to key literature.
%%
Cody Christopher observed that the monochromatic admissibility of a fused pair depends only on its coset-index difference and proposed the resulting symmetry reduction.
The reduction can permit direct construction in some cases, but in general mixed-difference pairings must still be found by search and verified against the sumset compatibility conditions between distinct fused pairs.
%%
%% Cody also identified such mixed-difference pairings as a source of further non-isomorphic colourings.

%%(POSSIBLE -- Cody Christopher has come to me with some results that I am still in the process of understanding.)

\section{Future Work}

We remain interested in exploring general-purpose combinatorial search algorithms for the optimisation of the size, and for the discovery of combinatorial objects with extensive symmetry. 
Starting from colourings constructed using our method, general-purpose SAT solvers are able to find smaller clique colourings that correspond to Ramsey colourings. 
That topic is beyond the scope of this short paper, which is written predominantly to announce the discovery of colourings for $8$ and $13$ and the corresponding method.  

\bibliographystyle{plain}
\bibliography{papers}

\end{document}